\newtheorem{theorem}{Theorem}
\newtheorem{definition}{Definition}
\newtheorem{remark}{Remark}
\newtheorem{example}{Example}
\newenvironment{proof}{\noindent {\em Proof.}}{\hspace*{\fill} $\Box $\newline}
\title{About Code Equivalence - a Geometric Approach}
\author{Iliya Bouyukliev and Stefka Bouyuklieva}
\date{}
\begin{document}

%
%
%


\maketitle

\begin{abstract}
The equivalence test is a main part in any classification problem. It helps to prove bounds for the main parameters of the considered combinatorial structures and to study their properties. In this paper, we present
 algorithms for equivalence of linear codes, based on their relation to multisets of points in a projective geometry.
\end{abstract}

%

\section{Introduction}\label{BB:Introduction}

 The problem for equivalence of linear codes is considered by many authors (see for example \cite{BB:bouyukliev2007code,BB:feulner2009automorphism,BB:NSendrier}). The most popular and widely used algorithms for code equivalence are based on
 the works of J. Leon \cite{BB:JLeon1982}. His programs are implemented in the software packages \textsc{Magma} \cite{BB:MAGMA} and \textsc{GAP} \cite{BB:GAP4}.
Leon's algorithm is very good for finding the automorphism
group, but doesn't give a canonical form. The canonical form is
very important for a fast comparison of a large number of objects
and it is the basis of algorithms for generating
combinatorial structures (see McKay \cite{BB:McKay2014}).

Our algorithms for equivalence are based on an algorithm for isomorphism of binary matrices. The set of all binary matrices with $n$ columns can be partitioned into equivalence classes under the action of the symmetric group $\mathcal{S}_n$. For each class of equivalence we choose one representative according to a certain rule, which we call a canonical representative. The canonical form of a matrix is the canonical representative of its equivalence class. The isomorphism test of matrices is then reduced to comparing their canonical forms.
In addition to the canonical forms, the algorithm used also calculates the order and generating elements of the automorphism groups of the considered matrices. For more details on this algorithm, called \textsc{IsBMiso}, see \cite{BB:bouyukliev2007code} and \cite{BB:LCequivalence}.

This article discusses the question in which cases the geometric approach in the linear code equivalence test is more effective, as well as why and how it can be applied. We compare the developed algorithms with the algorithm implemented in the program \textsc{LCequivalence} which is a module in the software package \textsc{QextNewEdition} \cite{BB:LCequivalence}.

The paper is organized in four sections. Section 2 consists of three subsections in which we give some important information about linear codes, Galois geometries and the relationship between the codes and multisets of points in a projective space.
 We also show an approach how to transform the
problem of code equivalence to the problem of isomorphism of
binary matrices. In Section 3, we describe the algorithms, named \textsc{CEIMPG} (Code Equivalence by Incidence Matrix of Projective Geometry) and \textsc{CESIMPG} (Code Equivalence by Shortened Incidence Matrix of Projective Geometry). In Section 4 we present some experimental results and compare the algorithms \textsc{CESIMPG} and \textsc{LCequivalence}.

\section{Preliminaries}\label{BB:Preliminaries}

Let $\mathbb{F}_q$ be a finite field with $q$ elements where $q=p^m$ for a prime $p$.
The \textit{support} $\mbox{\rm supp}(w)$ of a vector $w\in \mathbb{F}_q^n$ is the set of coordinate positions where the coordinates of $w$ are nonzero. The cardinality of the support is the \textit{Hamming weight} $\mbox{\rm wt}(w)$ of $w$, so it is equal to the number of its nonzero coordinates.
The \emph{Hamming distance\/} between two vectors of
$\mathbb{F}_q^n$ is defined as the number of coordinates in which they
differ.  A \emph{$q$-ary linear} $[n,k,d]_q$ {\it code} is a
$k$-dimensional linear subspace of $\mathbb{F}_q^n$ with minimum
distance $d$.  
Usually, a linear code is represented by its \textit{generator matrix}. The rows of a generator matrix form a basis of the code as a linear space. Here we use also a representation of the codes by their characteristic
vectors. 
For more details on the parameters and properties of linear codes we refer to \cite{BB:hufpless03}.

\subsection{Equivalence of linear codes}

\begin{definition}
We say that two linear $[n,k]_q$ codes $C_1$ and $C_2$ are
\textbf{equivalent}, if the codewords of $C_2$ can be obtained
from the codewords of $C_1$ via a finite sequence of
transformations of the following types: (1) permutation of
coordinate positions; (2) multiplication of the elements in a
given position by a non-zero element of $\mathbb{F}_q$; (3) application of
a field automorphism to the elements in all coordinate positions.
\end{definition}

This definition is well motivated as the transformations  (1)--(3)
preserve the Hamming distance and the linearity (for more details
see \cite[Chapter 7.3]{BB:KaskiOstergard}).
It is based on the action of the group ${\rm Mon}_n(\mathbb{F}_q)$ of all monomial $n\times n$ matrices for a prime
field and of the semidirect product ${\rm Mon}_n(\mathbb{F}_q)\rtimes {\rm Aut}(\mathbb{F}_q)$ for a composite field.

An \emph{automorphism} of a linear code $C$ is a pair $(M,\alpha)\in \mathrm{Mon}_n(\mathbb{F}_q)\rtimes \mathrm{Aut}(\mathbb{F}_q)$ such that $vM\alpha\in C$ for any codeword $v\in C$. The set of all automorphisms of the code $C$ forms the \emph{automorphism group} $\mathrm{Aut}(C)$.
For binary codes, $\mathrm{Aut}(C)$ consists only of permutation matrices and can be considered as a subgroup of the symmetric group $S_n$.

Many algorithms for codes use a set of codewords with given
properties - to be invariant with respect to the automorphism
group and to generate the code as a linear space. Usually, this
set consists of codewords with weights close to the minimum
weight. The problem of generating such a set is related to two other problems known as NP-complete -- the \textsc{Weight Distribution Problem}
\cite{BB:BMcET}, and the \textsc{Minimum
Distance Problem} \cite{BB:Vardy}. The complexity of
the \textsc{Code Equivalence Problem} is studied in \cite{BB:Petrank}.


We consider the \textsc{Code Equivalence Problem} for linear $q$-ary codes of length $n$. Leon's algorithm \cite{BB:JLeon1982} is based on the group action on a set of $(q-1)n$ points. The algorithm in the package \textsc{Q-Extension} and its successor \textsc{QextNewEdition} reduces the code equivalence problem to the problem for isomorphism of binary matrices with $2(q-1)n$ columns. A detailed description of this representation is given in \cite{BB:Maria2014}. In the main algorithm presented here, we use binary matrices with fewer, most often $n$ columns. For this reason, the presented algorithm is much more efficient in many cases.

\subsection{Galois geometries}

For the main definitions and theorems as well as more details on Galois geometries we refer to \cite[Section 14.4]{BB:FFHandbook} and \cite{BB:Storme2021}.

\begin{definition}
Let $V = V (n + 1, \mathbb{F})$, with $n\ge 1$, be an $(n + 1)$-dimensional vector space over
the field $\mathbb{F}$ with zero element 0. Define the equivalence relation $\sim$ on the set of nonzero vectors of $V$: for
$v,w\in V\setminus\{ 0\}$, $v\sim w$ if and only if $w = \alpha v$ for some $\alpha\in\mathbb{F}$, $\alpha\neq 0$.

(1) The set of equivalence classes is the \textit{$n$-dimensional projective space} over $\mathbb{F}$. It
is denoted by $\mathrm{PG}(n,\mathbb{F})$ or, when $\mathbb{F} = \mathbb{F}_q$, by $\mathrm{PG}(n,q)$.

(2) The elements of $\mathrm{PG}(n,\mathbb{F})$ are points; the equivalence class of the vector $X$ is
the point $[X]$. The vector $X$ is a coordinate vector for $[X]$ or $X$ is a vector
representing $[X]$. In this case, $\alpha X$ with $\alpha\in \mathbb{F}\setminus\{ 0\}$ also represents $[X]$, that is,
by definition, $[\alpha X] = [X]$.

(3) If $X = (x_0,\ldots,x_n)$ for some basis, then the $x_i$ are the coordinates of the point $[X]$.

(4) The points $[X_1],\ldots,[X_r]$ are linearly independent if a set of vectors
$X_1,\ldots,X_r$ representing them is linearly independent.
\end{definition}

\begin{definition}
Consider $V (n + 1,q)$ and its corresponding projective space $\mathrm{PG}(n,q)$.
For any $m = 0,1,\ldots,n$, an $m$-dimensional subspace, also called $m$-space, of
$\mathrm{PG}(n,q)$ is a set of points for which the union of all the corresponding coordinate
vectors, together with the zero vector, form an $(m + 1)$-dimensional vector subspace
of $V (n + 1,q)$. 
\end{definition}

 A $1$-dimensional subspace is called a (projective)
line, a $2$-dimensional subspace is called a (projective) plane, and a $3$-dimensional
subspace is called a (projective) solid. An $(n-1)$-dimensional subspace of $\mathrm{PG}(n,q)$
is called a hyperplane. An $(n-r)$-dimensional subspace of $\mathrm{PG}(n,q)$ is also called a
subspace of codimension $r$.

In $\mathrm{PG}(n,q)$, every hyperplane is a set of points $[X]$ whose coordinate
vectors $X = (x_0,\ldots,x_n)$ satisfy a linear equation
$$u_0x_0 + u_1x_1 + \cdots + u_nx_n = 0$$
with $u=(u_0,\ldots,u_n)\in\mathbb{F}_q^{n+1}\setminus\{(0,\ldots,0)\}$, and is denoted by $\pi(u)$.

\begin{definition}
 A collineation $\alpha$ of $\mathrm{PG}(n,q)$, $n\ge 2$, is a bijection which preserves
incidence. 
\end{definition}


\begin{theorem} {\rm (Fundamental Theorem of Galois Geometry)}
If $\alpha$ is a collineation of $\mathrm{PG}(n,q)$, $q=p^s$, $p$ prime, $s\ge 1$, then $\alpha$ is a semilinear bijective transformation
of $V (n + 1, q)$, i.e., there exists a nonsingular $(n + 1)\times (n + 1)$ matrix $A$ over $\mathbb{F}_q$ and an
automorphism $\rho : \mathbb{F}_q\to\mathbb{F}_q$, such that
\[
\alpha: \left[\begin{array}{c}
x_0\\
x_1\\
\vdots\\
x_n
\end{array}\right]\mapsto A\left[\begin{array}{c}
\rho(x_0)\\
\rho(x_1)\\
\vdots\\
\rho(x_n)
\end{array}\right]
\]
\end{theorem}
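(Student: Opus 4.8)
The plan is to prove this as the classical Fundamental Theorem of Projective Geometry, by first normalizing $\alpha$ with a projectivity and then extracting the field automorphism from the incidence structure of a single line. First I would record the immediate consequences of the definition: a bijection preserving incidence maps $m$-spaces to $m$-spaces for every $m$ (a subspace is determined by incidence alone), so in particular $\alpha$ preserves linear independence and general position. Consequently $\alpha$ carries the standard frame — the $n+2$ points $E_0=[1,0,\ldots,0],\ldots,E_n=[0,\ldots,0,1]$ together with the unit point $E=[1,\ldots,1]$ — to another ordered set of $n+2$ points in general position.

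Next I would invoke the sharp transitivity of $\mathrm{PGL}(n+1,q)$ on ordered frames: there is a nonsingular matrix $A$ whose induced projectivity sends $\alpha(E_0),\ldots,\alpha(E_n),\alpha(E)$ back to $E_0,\ldots,E_n,E$. Setting $\beta=A^{-1}\circ\alpha$, the map $\beta$ is a collineation fixing every point of the standard frame, and $\alpha=A\circ\beta$. It therefore suffices to show that such a frame-fixing $\beta$ acts by a coordinatewise field automorphism, for then $\alpha([X])=A\,[\rho(x_0),\ldots,\rho(x_n)]^{\!\top}$ is exactly the asserted semilinear form.

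The heart of the argument is reconstructing the field operations from incidence on the line $\ell=E_0E_1$. Every point of $\ell$ other than $E_0$ can be written $[t,1,0,\ldots,0]$ with $t\in\mathbb{F}_q$; the point $t=1$ is $\ell\cap\langle E,E_2,\ldots,E_n\rangle=[1,1,0,\ldots,0]$, which, being the intersection of two frame-fixed subspaces, is fixed by $\beta$. Thus $\beta$ induces a permutation $\rho$ of $\mathbb{F}_q$ with $\rho(0)=0$ and $\rho(1)=1$. Using auxiliary reference points in a plane containing $\ell$ — here $n\ge 2$ is essential, which is why the definition of collineation requires it — I would realize $s+t$ and $st$ by the standard perspectivity (von Staudt) constructions, which use only joins and intersections. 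Since $\beta$ preserves all incidences and fixes the frame, it maps the configuration that produces $s+t$ to the configuration that produces $\rho(s)+\rho(t)$, and likewise for products; hence $\rho(s+t)=\rho(s)+\rho(t)$ and $\rho(st)=\rho(s)\rho(t)$, so $\rho\in\mathrm{Aut}(\mathbb{F}_q)$.

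Finally I would propagate $\rho$ to all coordinates. Running the same construction on each coordinate line $E_0E_i$ yields the same automorphism, because the lines are tied together through the single fixed unit point $E$; and since an arbitrary point $[x_0,\ldots,x_n]$ is pinned down by its incidences with the coordinate subspaces, one gets $\beta([X])=[\rho(x_0),\ldots,\rho(x_n)]$ on all of $\mathrm{PG}(n,q)$. Combined with $\alpha=A\circ\beta$, this gives the stated formula. I expect the main obstacle to be exactly the third step — arranging the addition and multiplication constructions so that they are manifestly incidence-preserved and verifying that the one permutation $\rho$ respects both operations simultaneously; the frame normalization in the first two steps is routine linear algebra by comparison.
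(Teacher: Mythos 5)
Your outline is sound, but note that the paper itself gives no proof of this statement: it is quoted as the classical Fundamental Theorem of Galois Geometry from the cited references on Galois geometries (the Handbook of Finite Fields and Storme's chapter), so there is no internal argument to compare against. What you propose is precisely the standard textbook proof that those references rely on: reduce by sharp transitivity of $\mathrm{PGL}(n+1,q)$ on ordered frames to a collineation fixing the standard frame, extract a permutation $\rho$ of $\mathbb{F}_q$ from the action on a coordinate line, show via the von Staudt join-and-intersection constructions that $\rho$ respects addition and multiplication, and propagate $\rho$ coordinatewise. The skeleton is correct, including the points that are easy to get wrong: that a frame-fixing collineation fixes the unit point $[1,1,0,\ldots,0]$ of each coordinate line because it is the intersection of frame-spanned subspaces, and that $n\ge 2$ is genuinely needed (for $n=1$ every permutation of the points is trivially incidence-preserving and the theorem fails). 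Two places would need real work to turn the sketch into a proof: (i) the von Staudt step, where you must exhibit the explicit planar configurations for $s+t$ and $st$ and check they are carried to the corresponding configurations for $\rho(s),\rho(t)$; and (ii) the final propagation, where the claim that the automorphisms obtained on the different lines $E_0E_i$ coincide, and that a general point is pinned down by its incidences with coordinate subspaces, deserves an explicit argument (e.g., projecting $[X]$ onto each line $E_0E_i$ from the complementary coordinate subspace). You flag both as the expected obstacles, which is accurate; neither hides an error, only labor.
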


Let $\mathbb{Z}_{n+1}(q) =\{\delta : x \mapsto \alpha I_{n+1}x, \alpha\in \mathbb{F}_q\setminus\{0\}, x\in V(n+1,q)\}$, where
$I_{n+1}$ is the $(n + 1)\times (n + 1)$ identity matrix. The projective group of $\mathrm{PG}(n,q)$, $n\ge 1$, is the group
$\mathrm{PGL}_{n+1}(q) = \mathrm{GL}_{n+1}(q)/ \mathbb{Z}_{n+1}(q)$, and the collineation group of $\mathrm{PG}(n,q)$, $n\ge 1$, is the group
$\mathrm{P\Gamma L}_{n+1}(q) = \mathrm{PGL}_{n+1}(q) \rtimes \mathrm{Aut}(\mathbb{F}_q)$.

\begin{definition}
 Two sets $S$ and $S'$ of spaces contained in $\mathrm{PG}(n,q)$ are called projectively
equivalent to each other if and only if there is a collineation $\alpha\in \mathrm{P\Gamma L}_{n+1}(q)$ which
maps $S$ onto $S'$.
\end{definition}

\subsection{Linear codes and multisets of points}

The projective space $\mathrm{PG}(k-1,q)$ contains $\theta(k-1,q)=\frac{q^k-1}{q-1}$ points. For each point $[X]$, take $X$ to be the coordinate vector, whose first nonzero coordinate is 1 (we call such vectors \textit{normalized}), and then order the points lexicographically. We use this ordering to correspond a characteristic vector to each multiset $M$ of points in the projective geometry:
\begin{equation}\label{BB:chi_M}
\chi(M)=\left(\chi_1,\chi_2,\ldots,\chi_{\theta(k-1,q)}\right)\in \mathbb{Z}^{\theta(k-1,q)}
\end{equation}
where $\chi_u$ shows how many times the $u$-th point of $\mathrm{PG}(k-1,q)$ occurs in $M$, $u=1,\ldots,\theta(k-1,q)$.

There is a direct relation between the linear codes of dimension $k$ over $\mathbb{F}_q$ and the multisets of points in the projective geometry $\mathrm{PG}(k-1,q)$.
 Let $G$ be a generator
matrix of a $q$-ary linear $[n,k,d]$ code $C$, and let $g_1, g_2,\ldots, g_n\in\mathbb{F}_q^k=V(k,q)$
 be the columns of $G$. Suppose that none of these columns is the zero vector (then we say that the code $C$ is of full length). Each vector $g_i$ determines a point $[g_i]$ in the projective space $\mathrm{PG}(k-1,q)$. If the vectors $g_i$ are pair-wise independent, then $M_G = \{[g_1], [g_2],... , [g_n]\}$ is a
set of $n$ points in $\mathrm{PG}(k-1,q)$. When dependence occurs, we interpret $M_G$ as a multiset and
count each point with the appropriate multiplicity \cite{BB:DodunekovSimonis}.

On the other hand, if $M$ is a multiset of $n$ points in $\mathrm{PG}(k-1,q)$, the $k\times n$ matrix $G_M$, whose columns are the normalized coordinate vectors of the points from $M$, generates a linear $[n,k]$ code $C_M$. The minimum distance of $C_M$ is equal to $d$ if (a) each hyperplane of $\mathrm{PG}(k-1,q)$ meets $C_M$ in at most $n - d$ points and (b) there is a hyperplane meeting $C_M$ in exactly $n - d$ points. Some authors even give a definition for linear codes as multisets of points \cite{BB:Landjev}.
%
If the multiset $M$ is a set then we call $C_M$ a projective code. Two codes of full length
are equivalent if and only if the corresponding multisets of points are projectively
equivalent \cite{BB:DodunekovSimonis}.




The characteristic vector of the code $C$ with respect to its generator matrix $G$ is the characteristic vector of the multiset $M_G$, or
\begin{equation}\label{BB:chi}
\chi(C,G)=\left(\chi_1,\chi_2,\ldots,\chi_{\theta(k-1,q)}\right)\in \mathbb{Z}^{\theta(k-1,q)}
\end{equation}
where $\chi_u$ is the number of the columns of $G$ that are coordinate vectors of the $u$-th point of $\mathrm{PG}(k-1,q)$, $u=1,\ldots,\theta(k-1,q)$. 
When $C$ and $G$ are clear from the context, we will briefly write $\chi$.

A code $C$ can have different characteristic vectors depending on the chosen generator matrices. If we permute the columns of the matrix $G$ we will obtain a permutation equivalent code to $C$ having the same characteristic vector. Moreover, from a characteristic vector one can restore the columns of the generator matrix $G$ but eventually at different order and/or multiplied by  nonzero elements of the field.

\section{The algorithms}
\label{BB:algorithm}

\subsection{Algorithm for Code Equivalence using the Incidence Matrix\\ of Projective Geometry}


Denote by $G_{q,k}$ the $k\times \theta(k-1,q)$ matrix whose columns are the normalized coordinate vectors of the points in $\mathrm{PG}(k-1,q)$ ordered lexicographically. The rows of $G_{q,k}$ are linearly independent and so it generates a $q$-ary linear code of dimension $\theta(k-1,q)$ and dimension $k$. This code is called the \textit{simplex code} and denoted by $\mathcal{S}_{q,k}$. Its characteristic vector is $(1,1,\ldots,1)$, and all nonzero codewords of $\mathcal{S}_{q,k}$ have weight $\theta(k-1,q)/2$.

Further, we consider the matrix $A_k=G_{q,k}^{\rm T}\cdot G_{q,k}$. The rows of this matrix form a maximal set of nonproportional codewords in the considered simplex code. For the elements of $A_{k}$ we have $a_{ij}=u_i\cdot u_j=\sum_{m=1}^k u_{mi}u_{mj}$, where $u_i\cdot u_j$ is the Euclidean inner product of the vectors $u_i,u_j\in \mathbb{F}_q^k$ over the field $\mathbb{F}_q$. Obviously, the $i$-th row of the matrix $A_{k}$ can be identified with the hyperplane $\pi(u_i)$ with the following equation
$$u_{1i}x_1+u_{2i}x_2+\cdots+u_{ki}x_k=0.$$
We denote by ${\cal N}(A_k)$ the matrix obtained from $A_k$ by replacing all nonzero elements by $1$ and call it a normalized matrix. Obviously, ${\cal N}(A_k)_{ij}=0$ if and only if the vectors $u_i$ and $u_j$ are mutually orthogonal. This can be interpreted in the following way: ${\cal N}(A_k)_{ij}=0$ if and only if the point $[u_j]$ is incident with the hyperplane induced by $u_i$. If we juxtapose 0's and 1's in ${\cal N}(A_k)$, we obtain the incidence matrix of the points and hyperplanes in the projective space $\mathrm{PG}(k-1,q)$. Both matrices have the same automorphism group which we denote by $\mathrm{Aut}_k=\mathrm{Aut}({\cal N}(A_k))$. It consists of the permutations of the columns that preserve the set of rows of the matrix, so it is a subgroup of the symmetric group $S_{\theta(k-1,q)}$. 
Since the automorphism group of a finite incidence structure acts as permutation group on the points, the automorphism group of the matrix ${\cal N}(A_k)$ is isomorphic to $\mathrm{P\Gamma L}_{k}(q)$ and instead of acting on the points in $\mathrm{PG}(k-1,q)$ or on the columns of the matrix ${\cal N}(A_k)$, we can take the action on the characteristic vectors.

Let $\mathfrak{C}$ be the set of the projective $[n,k]_q$ codes. Consider the action of the group $\mathrm{Aut}_k$ on the set $\mathfrak{C}$ as the codes are represented by their characteristic vectors. Then we have the following theorem.


\begin{theorem}
Two projective linear $[n,k]_q$ codes are equivalent if and only if their characteristic vectors belong to one orbit under the action of $\mathrm{Aut}_k$ on the set $\mathfrak{C}$.
\end{theorem}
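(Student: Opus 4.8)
The plan is to establish the equivalence by translating both directions through the correspondence between codes and multisets of points, already set up in the excerpt. The central fact I will use is the cited result that two full-length codes are equivalent if and only if their associated multisets of points are projectively equivalent, together with the identification of $\mathrm{Aut}_k=\mathrm{Aut}(\mathcal{N}(A_k))$ with $\mathrm{P\Gamma L}_k(q)$ acting as a permutation group on the points of $\mathrm{PG}(k-1,q)$. Since a projective code corresponds to a \emph{set} (multiplicity-free multiset) of points, its characteristic vector $\chi$ is a $0/1$ vector recording exactly which points occur; the columns of a generator matrix, after normalization, are precisely the coordinate vectors of the points in the support of $\chi$.

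First I would fix notation: for a projective $[n,k]_q$ code $C$ with generator matrix $G$, let $M_G$ be its multiset of points (here a set of $n$ points) and $\chi(C,G)$ its characteristic vector as in~(\ref{BB:chi}). I would observe that permuting or rescaling the columns of $G$, or applying a field automorphism, leaves $M_G$ and hence $\chi$ unchanged up to the action of a collineation, because each such operation on $G$ is exactly the effect on coordinate vectors of an element of $\mathrm{P\Gamma L}_k(q)$ acting on the underlying points.

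For the forward direction, suppose $C_1$ and $C_2$ are equivalent. By the Dodunekov--Simonis result their multisets $M_1,M_2$ are projectively equivalent, so there is a collineation $\alpha\in\mathrm{P\Gamma L}_k(q)$ with $\alpha(M_1)=M_2$. Since $\alpha$ permutes the points of $\mathrm{PG}(k-1,q)$ and $\mathrm{Aut}_k$ is precisely the group of such permutations acting on characteristic vectors, the image of $\chi(C_1)$ under the corresponding element of $\mathrm{Aut}_k$ is $\chi(C_2)$; thus the two characteristic vectors lie in one $\mathrm{Aut}_k$-orbit. For the converse, if $\chi(C_1)$ and $\chi(C_2)$ lie in one orbit, the permutation realizing this comes from a collineation $\alpha$, which carries the point set of $C_1$ to that of $C_2$, hence $M_1$ and $M_2$ are projectively equivalent, and again by the cited result $C_1$ and $C_2$ are equivalent.

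The main obstacle, and the step deserving the most care, is the precise justification that the action of $\mathrm{Aut}_k$ on characteristic vectors coincides with the action of $\mathrm{P\Gamma L}_k(q)$ on point sets — i.e.\ that \emph{every} permutation in $\mathrm{Aut}_k$ is induced by a genuine collineation and conversely. This is supplied by the Fundamental Theorem of Galois Geometry and the identification $\mathrm{Aut}(\mathcal{N}(A_k))\cong\mathrm{P\Gamma L}_k(q)$ asserted just before the theorem; I would therefore lean on that identification rather than reprove it, taking care that the projectivity of the codes (so that $\chi$ is a genuine $0/1$ incidence vector of a point set, with no multiplicities to track) is exactly what lets the set-theoretic image under a permutation match the geometric image under the collineation.
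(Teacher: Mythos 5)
Your proposal is correct and follows essentially the same route as the paper's own proof: both rest on the Dodunekov--Simonis result that full-length codes are equivalent if and only if their multisets of points are projectively equivalent, combined with the identification of $\mathrm{Aut}_k=\mathrm{Aut}(\mathcal{N}(A_k))$ with $\mathrm{P\Gamma L}_k(q)$ acting as permutations on points and hence on characteristic vectors. If anything, your write-up is more careful than the paper's (which compresses both directions into one sentence and even writes $\chi(M_1)\pi=M_2$ where it means $\chi(M_1)\pi=\chi(M_2)$), so no changes are needed.
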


\begin{proof}
According to \cite[Proposition 1]{BB:DodunekovSimonis}, two codes of full length
are equivalent if and only if the corresponding multisets of points are projectively
equivalent. On the other hand, two multisets $M_1$ and $M_2$ are equivalent if and only if there is a permutation $\pi$ that maps the points of $M_1$ to the points of $M_2$, which can be applied further to the characteristic vectors. It turns out that the  multisets $M_1$ and $M_2$ are equivalent if and only if there is a permutation $\pi$ such that $\chi(M_1)\pi=M_2$ (the characteristic vectors belongs to one orbit).
\end{proof}

Take $C_i=C_{M_i}$ and $G_i$ to be the matrix whose columns are the normalized coordinate vectors of the points in $M_i$, $i=1,2$.
Since the groups $\mathrm{P\Gamma L}_{k}(q)$ and $\mathrm{Aut}_k$ are isomorphic, to the permutation $\pi$ we can correspond a nonsingular matrix $T_\pi\in \mathrm{PGL}(k,q)$ such that $T_\pi G_1= G_2P$ where $P$ is a permutation matrix that permutes the columns of the $k\times n$ matrix $G_2$.

We present the algorithm for projective linear codes which means that their corresponding multisets are actually sets of points, and the coordinates of their characteristic vectors are only 0's and 1's.

We apply our program for isomorphism of binary matrices to obtain the inequivalent codes in the following way:
\begin{enumerate}
\item to any code $C$ with a characteristic vector $\chi$ with respect to its generator matrix we correspond the $(\theta(k-1,q)+1)\times\theta(k-1,q)$ matrix $G_C=\left(\begin{array}{c} {\cal N}(A_k)\\ \chi \end{array}\right)$;
\item we run the isomorphism test \textsc{IsBMiso} for these  matrices.
\end{enumerate}

\begin{remark} If the codes are not projective, we can add coloring of the columns and then apply the same algorithm.
\end{remark}

\subsection{Algorithm for Code Equivalence using a Shortened Incidence Matrix of Projective Geometry}

We will describe the algorithm in the case when $q$ is a prime. We use smaller binary matrices and prove that if these matrices are not isomorphic then the codes are not equivalent. If these matrices are isomorphic, we use a special approach to see if the corresponding codes are equivalent.

Instead of the whole matrix $\mathcal{N}(A_k)$ we take only those columns that correspond to nonzero coordinates of the considered characteristic vector $\chi(C,G)$ of the code $C$. The same matrix can be obtained by normalizing the matrix $A_G=A_k^TG$. The rows of $A_G$ are not proportional to each other and they are nonzero codewords in $C$. Furthermore, any nonzero codeword in $C$ is proportional to a row-vector of this matrix. Without loss of generality, we can take the matrix $G$ in systematic form which means that all column-vectors of weight 1 are columns in $G$. Moreover, we can take the columns to be normalized vectors.

\begin{theorem}\label{BBlem:ncong}
If the matrices $\mathcal{N}(A_{G_1})$ and $\mathcal{N}(A_{G_2})$ are not isomorphic then the codes $C_1$ and $C_2$ with generator matrices $G_1$ and $G_2$ are not equivalent.
\end{theorem}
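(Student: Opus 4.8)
The plan is to prove the contrapositive: assuming $C_1$ and $C_2$ are equivalent, I would show that $\mathcal{N}(A_{G_1})$ and $\mathcal{N}(A_{G_2})$ must be isomorphic as binary matrices. So suppose $C_1$ and $C_2$ are equivalent. Since both are projective codes of full length, by the earlier theorem their characteristic vectors $\chi_1$ and $\chi_2$ lie in one orbit under the action of $\mathrm{Aut}_k$; equivalently, there is a collineation, and hence a permutation $\pi$ of the points of $\mathrm{PG}(k-1,q)$ induced by some $T_\pi \in \mathrm{PGL}(k,q)$ (possibly together with a field automorphism, but $q$ is prime here so $\mathrm{Aut}(\mathbb{F}_q)$ is trivial), that maps the support of $\chi_1$ bijectively onto the support of $\chi_2$.

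First I would recall the key structural fact established just before the statement: the matrix $\mathcal{N}(A_{G_i})$ is exactly the submatrix of $\mathcal{N}(A_k)$ consisting of those columns indexed by the nonzero coordinates of $\chi(C_i,G_i)$, i.e.\ by the points actually appearing in the multiset $M_i$. The rows of $\mathcal{N}(A_k)$ are indexed by the hyperplanes of $\mathrm{PG}(k-1,q)$, and the $(i,j)$ entry records the incidence of hyperplane $i$ with point $j$. The whole incidence matrix $\mathcal{N}(A_k)$ has automorphism group isomorphic to $\mathrm{P\Gamma L}_k(q)$ acting on the columns (points), and this column permutation simultaneously permutes the rows (hyperplanes), since a collineation sends hyperplanes to hyperplanes while preserving incidence.

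Next I would transport the collineation $\pi$ through this incidence structure. The permutation $\pi$ acts on all points and correspondingly on all hyperplanes, giving an automorphism of the full matrix $\mathcal{N}(A_k)$. Restricting the column action to the support of $\chi_1$, it carries those columns onto the columns indexed by the support of $\chi_2$ (because $\pi$ maps $M_1$ to $M_2$), while the induced row permutation on hyperplanes identifies the corresponding row sets. The point is that incidences are preserved: a point $[g]$ of $M_1$ lies on a hyperplane $\pi(u)$ if and only if its image under the collineation lies on the image hyperplane. Hence the submatrix $\mathcal{N}(A_{G_1})$ is carried to $\mathcal{N}(A_{G_2})$ by a simultaneous permutation of rows and columns, which is precisely an isomorphism of binary matrices. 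This establishes that equivalence of the codes forces isomorphism of these shortened incidence matrices, and the contrapositive is exactly the statement of the theorem.

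The main obstacle I anticipate is bookkeeping the row side carefully. On the columns the action is transparent, but $\mathcal{N}(A_{G_i})$ retains \emph{all} $\theta(k-1,q)$ rows (one per hyperplane), so I must check that the collineation-induced hyperplane permutation genuinely aligns the two row-indexed sets and preserves the full incidence pattern restricted to the chosen columns, rather than only the columns. Concretely, one must verify that if $T_\pi G_1 = G_2 P$ for a permutation matrix $P$, then the orthogonality relations $u_i \cdot g = 0$ defining the entries of $A_{G_1}$ transform consistently into those defining $A_{G_2}$ under the dual action of $T_\pi$ on the hyperplane coordinates (the inverse-transpose action). Once this compatibility between the point action and the dual hyperplane action is pinned down, the isomorphism of the binary matrices follows immediately.
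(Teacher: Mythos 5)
Your proposal is correct and is essentially the paper's argument: the paper also proves the contrapositive, observing that an equivalence of the codes yields (via Theorem 1) an automorphism of the full incidence matrix $\mathcal{N}(A_k)$ carrying the support of $\chi_1$ onto that of $\chi_2$, which restricts to an isomorphism of $\mathcal{N}(A_{G_1})$ and $\mathcal{N}(A_{G_2})$. The only difference is presentational: the paper compresses this into a two-line reduction through the matrices $\mathcal{N}(G_{C_1})$, $\mathcal{N}(G_{C_2})$ of the \textsc{CEIMPG} algorithm, whereas you make the collineation's simultaneous action on points (columns) and hyperplanes (rows) explicit, which is exactly the detail the paper leaves implicit.
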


\begin{proof} Let $\mathcal{N}(A_{G_1})\ncong \mathcal{N}(A_{G_2})$. Then the matrices $\mathcal{N}(G_{C_1})$ and $\mathcal{N}(G_{C_2})$ are not isomorphic, so the codes $C_1$ and $C_2$ are not equivalent.
\end{proof}

Theorem \ref{BBlem:ncong} shows that the case $\mathcal{N}(A_{G_1})\ncong \mathcal{N}(A_{G_2})$ is clear. Let us now consider the opposite case, when $\mathcal{N}(A_{G_1})\cong \mathcal{N}(A_{G_2})$. Denote the automorphism group of $\mathcal{N}(A_{G_1})$ by $H_1$. Recall that the algorithm \textsc{IsBMiso} tests the binary matrices for isomorphism and  computes generating elements of their automorphism groups.
There are two possibilities for $H_1$ - (1) to be trivial or (2) to contain at least two elements.

\bigskip
(1) Let $H_1=\{ \epsilon\}$.

Since the group is trivial, there is a unique permutation $\sigma\in S_n$ that maps the rows of $\mathcal{N}(A_{G_1})$ into the rows of $\mathcal{N}(A_{G_2})$. The codes are equivalent if there is a monomial matrix $M_{\sigma}=P_{\sigma}D$ such that $G_1M_{\sigma}$ generates the second code, where $P_{\sigma}$ is the permutation matrix corresponding to $\sigma$, and $D=\mathrm{diag}(\lambda_1,\ldots,\lambda_n)$ is a nonsingular diagonal matrix. Since $G_2$ and $G_1M_{\sigma}$ generate the same code, there is an invertible matrix $Q\in\mathrm{GL}(k,q)$ such that $QG_2=G_1M_{\sigma}$. Without loss of generality we can consider $G_2$ in the form $(I_k \vert E)$. Hence
\begin{equation}\label{BB:equ}
QG_2=(Q\vert QE)=G_1M_{\sigma}=
G_1P_{\sigma}\mathrm{diag}(\lambda_1,\ldots,\lambda_n).
\end{equation}

The columns of $G_1P_{\sigma}$ are the permuted columns of $G_1=(g_1,g_2,\dots,g_n)$. It turns out that $G_1P_{\sigma}D=(\lambda_1 g_{i_1},\ldots,\lambda_n g_{i_n})$ where $\sigma^{-1}(s)=i_s$, $s=1,\ldots,n$. Hence $Q=(\lambda_1 g_{i_1},\ldots,\lambda_k g_{i_k})$.
The next step is to solve the system of linear equations
$$QE=(\lambda_1 g_{i_1},\ldots,\lambda_k g_{i_k})E=(\lambda_{k+1} g_{i_{k+1}},\ldots,\lambda_n g_{i_n})$$
with variables $\lambda_1,\ldots,\lambda_n$. This is a homogeneous system so it is consistent, but we are looking for a solution in which all entries are nonzero. Such a solution gives an invertible matrix $Q\in\mathrm{GL}(k,q)$ which maps the first set of points into the second one and then these multisets and their codes are equivalent. It also means that there is an automorphism of ${\cal N}(A_k)$ that maps the characteristic vector of the code $C_1$ to the characteristic vector of the second code.

If no solution has the needed property, the codes are inequivalent.

\bigskip
(2) Let $H_1$ is not trivial.

Let $H_1=\left\langle \tau_1,\ldots,\tau_m\right\rangle$. We can use two approaches in the algorithm. In the first one, for each permutation $\tau_i$, we are looking for a nonsingular matrix $Q_i$ such that
\begin{equation}\label{BB:eq_i}
Q_iG_1=G_1P_{\tau_i}D_i =G_1P_{\tau_i}\mathrm{diag}(\alpha_{i1},\ldots,\alpha_{in})\end{equation}
for nonzero elements $\alpha_1,\ldots,\alpha_n\in \mathbb{F}_q$. If we have needed solutions of the considered systems of linear equations and have computed the invertible matrices $Q_i$, we go to the last step. The existence of nonsingular matrices $Q_i$ for all $i$ shows that the order of $\mathrm{Aut}(C_1)$ is equal to $(q-1)|H_1|$. The matrices $Q_i$ generate the automorphism group of the corresponding set of points. For the last step, we need a permutation $\sigma\in S_n$ that maps the rows of $\mathcal{N}(A_{G_1})$ into the rows of $\mathcal{N}(A_{G_2})$. As in the case (1), we are looking for an invertible matrix $Q$ such that the system $$QG_2=(Q\vert QE)=G_1P_{\sigma}\mathrm{diag}(\lambda_1,\ldots,\lambda_n)$$
has a solution in which all entries are nonzero.

If for some $\tau_i$ the system (\ref{BB:eq_i}) does not have a solution with nonzero entries, then we go to the first algorithm.

\begin{example}
Consider the ternary codes $C_1$ and $C_2$ with generator matrices
$$G_1=\left[\begin{array}{cccccc}
1&0&0&1&2&0\\
0&1&0&1&1&1\\
0&0&1&1&1&0\\
\end{array}\right], \ \ \ G_2=\left[\begin{array}{cccccc}
1&0&0&1&1&0\\
0&1&0&1&2&0\\
0&0&1&1&0&2\\
\end{array}\right],
$$
respectively. The permutation $\sigma=(2 \ 3 \ 4)$ maps  the rows of $\mathcal{N}(A_{G_1})$ into the rows of $\mathcal{N}(A_{G_2})$. The automorphism groups of the two matrices are not trivial but we go directly to the last step of the algorithm, so we are looking for an invertible matrix $Q\in \mathrm{GL}(3,3)$ such that $$QG_1=G_2P_{\sigma}\mathrm{diag}(\lambda_1,\dots,\lambda_6).$$
This gives the following system of linear equations:
$$\begin{array}{|rrrl}
  \lambda_1&+\lambda_2 && =0 \\
 &\lambda_2&+\lambda_3 & =0\\
 &\lambda_2& & =\lambda_4\\
\end{array} \ \ \ \ \
\begin{array}{|rrrl}
  \lambda_1&+2\lambda_2 && =2\lambda_5 \\
 &2\lambda_2&& =\lambda_5\\
 &2\lambda_2& & =\lambda_5\\
\end{array} \ \ \ \ \
\begin{array}{|rrrl}
  & &0& =0 \\
 &&2\lambda_3 & =\lambda_6\\
 &&0 & =0\\
\end{array}.$$
The solution is $p(1,2,1,2,1,2)$, $p\in \mathbb{F}_3$. Thus we obtain
$$Q=\left[\begin{array}{ccc} 1&2&0\\ 0&2&1\\ 0&2&0\end{array}\right].$$
It turns out that the two codes are equivalent.
\end{example}

The presented example shows that we can prove that two codes are equivalent using only the last step in the algorithm. The problem arises when there is no invertible matrix to satisfy (\ref{BB:equ}). This fact does not prove the inequivalence of the considered codes and therefore we have to follow the other steps of the algorithm. It is possible to obtain an invertible matrix that sends the codewords of $C_1$ to codewords of $C_2$ from some of the permutations $\sigma\tau_i$, $1\le i\le m$. Since we use canonical forms in the program for isomorphism of binary matrices, even if neither of the permutations $\sigma,\sigma\tau_1,\ldots,\sigma\tau_m$ produces invertible matrix, the equivalence of the two codes is still possible. Therefore, in such a situation we use the first algorithm, namely \textsc{CEIMPG}.

The second approach has the disadvantage that we do not count the automorphism groups of the codes. If we compare only two codes, this is not important, but if we have a set with more than a thousand codes, the first approach is more useful.

\begin{remark}
If the field is composite (when $q=p^m$, $p$ - prime, $m>1$) then the matrix equation (\ref{BB:equ}) changes to
\begin{equation}\label{BB:equ_q}
QG_2=(Q\vert QE)=
G_1P_{\sigma}\mathrm{diag}(\lambda_1,\ldots,\lambda_n)\rho,
\end{equation}
where $\rho$ is an automorphism of the field $\mathbb{F}_q$. In this case there are $n$ unknown variables $\lambda_1,\ldots,\lambda_n$ and an unknown automorphism $\rho$. Recall that the automorphism group of the field $\mathbb{F}_q$ is a cyclic group of order $m$.
\end{remark}

\section{Experimental results}
\label{BB:experimental}


The number of the needed basic operations in the described algorithms depends on the size of the input data and the structure of the considered codes. There is a relationship between the structure of the binary matrices that are used in the algorithms \textsc{CESIMPG} and \textsc{LCequivalence}. If the matrices correspond to regular combinatorial structures,
such as orthogonal arrays, t-designs or Hadamard matrices, the algorithms need more operations to compute the automorphism groups, to obtain the canonical forms and to distinguish the inequivalent codes. In fact, the difference in computational time between the algorithms comes from the difference in the size of the input data. For example, if we consider $[24,4]$ codes over $\mathbb{F}_3$, the algorithm \textsc{CESIMPG} uses $40\times 24$ binary matrices, but the matrices in \textsc{LCequivalence} have size $s\times 84$ where $s$ is the number of codewords in the considered generating set of the code, so we can expect that the first algorithm will be faster. Comparing the sizes, we conclude that presented here algorithm is faster for small dimensions. We present some experimental results in Table \ref{BB:table-1}. We first generate random codes with given length and dimension (their number is shown in column 3), then check them for equivalence. The number of inequivalent codes is given in column 4. In the last two columns we present the computational time of algorithms \textsc{CESIMPG} and \textsc{LCequivalence}, respectively.

All examples are executed on  (\textsc{Intel Core  i7-6700HQ 2.60 GHz processor}) in Active solution configuration --- Release, and Active solution platform --- \textsc{X64}. As a development environment for both algorithms we use \textsc{MS Visual Studio 2019}.

In addition, we have to mention that \textsc{CESIMPG} can be further improved in several directions. For example, for larger fields, many of the rows in the matrix $A_G$ have maximum supports, i.e. their Hamming weights are equal to the length of the code. After normalization, they go into the all-ones vector which does not give any information about the automorphism group and the orbits, and therefore we can remove these rows. The work on this algorithm is still ongoing and we expect to have better results in the computational time.

\begin{table}
\caption{Experimental results}
{\begin{tabular}{c|c|c||r|r|r|r}
\hline\noalign{\smallskip}
&& &$\sharp$ generated & $\sharp$ inequivalent  &    &   \\
$q$&$k$&$n$  & codes & codes   &  \textsc{CESIMPG}   & \textsc{LCequivalence}   \\
\noalign{\smallskip}\hline\noalign{\smallskip}
3 & 3 &10 &10 000 &   347  & 0.59s   & 2.68s \\
  & 3 &24 &10 000 & 8 306  & 1.16s   & 71.58s \\
  & 4 &10 &10 000 & 1 275  & 1.05s   & 2.17s \\
  & 4 &24 &10 000 & 10 000 & 1.87s   & 14.47s \\
  & 5 &10 &10 000 & 1 946 & 1.84s   & 2.17s \\
  & 5 &24 &10 000 & 10 000 &  3.24s &  8.78s\\
\hline
7 & 3 &10 &10 000 & 8 288  &  1.81s & 15.43s \\
  & 3 &24 &10 000 & 10 000 & 1.75s  & 136s \\
  & 4 &10 &1 000 &  999   &  0.56s  & 2.38s\\
  & 4 &24 &1 000 & 1 000  & 0.65s   & 7.55s \\
  & 5 &10 &1 000 & 1 000  & 4.79s   & 1.69s \\
  & 5 &24 &1 000 & 1 000  & 3.79s   & 5.88s \\
  \hline
11 & 3 &10 &10 000 & 9 986  &  5.69s   & 37.57s \\
  & 3 &24 &10 000 & 10 000  &  2.56s   &  335.69s\\
  & 4 &10 &1 000 & 1 000  &   2.38s  &  5.21s\\
  & 4 &24 &1 000 & 1 000  &  1.74s   &  21.80s\\
  & 5 &10 &1 000 & 1 000  &   27.96s  & 3.95s \\
  & 5 &24 &1 000 & 1 000  & 30.84s    &16.00s \\
\noalign{\smallskip}\hline
\end{tabular}}
\label{BB:table-1}
\end{table}

\section*{Acknowledgements}

The research of Stefka Bouyuklieva was supported by a Bulgarian NSF contract KP-06-N32/2-2019.
The research of Iliya Bouyukliev was supported, in part, by a Bulgarian NSF contract KP-06-Russia/33/17.12.2020.


\end{document}